\documentclass[11pt]{article}

\usepackage{geometry}
\usepackage{graphicx}
\usepackage{amsmath, amssymb, amsfonts, amsthm, float}
\usepackage{enumerate, color, framed, float, multirow}
\usepackage{comment, longtable, caption, subcaption, appendix}
\usepackage[sort,longnamesfirst]{natbib}
\usepackage{setspace, parskip}
\usepackage{placeins}
\usepackage{algorithm}
\usepackage[noend]{algpseudocode}

\makeatletter
\def\BState{\State\hskip-\ALG@thistlm}
\makeatother


\geometry{hmargin=3.5cm,vmargin={3cm,3cm},nohead,footskip=0.5in}

\setlength{\baselineskip}{0.5in} \setlength{\parskip}{.05in}

\allowdisplaybreaks


\setlength{\tabcolsep}{5pt}


\newcommand{\ds}{\displaystyle}

\newcommand{\Var}{\text{Var}}

\newcommand\numberthis{\addtocounter{equation}{1}\tag{\theequation}}

\setlength{\topsep}{1em}
\newtheorem{theorem}{Theorem}

\theoremstyle{remark}

\begin{document}
\onehalfspacing
\title{Efficient Bernoulli factory MCMC for intractable posteriors}

\date{\today}
\author{
	Dootika Vats\\
	Department of Mathematics and Statistics\\
	Indian Institute of Technology Kanpur\\
	\texttt{dootika@iitk.ac.in}
	\and
	Fl{\'a}vio B. Gon{\c c}alves\\
	Department of Statistics\\
	Universidade Federal de Minas Gerais\\
	\texttt{fbgoncalves@est.ufmg.br}
	\and
	Krzysztof {\L}atuszy{\'n}ski\\
	Department of Statistics\\
	University of Warwick\\
	\texttt{K.G.Latuszynski@warwick.ac.uk}
	\and
	Gareth O. Roberts\\
	Department of Statistics\\
	University of Warwick\\
	\texttt{Gareth.O.Roberts@warwick.ac.uk}
}
\maketitle
\begin{abstract}
Accept-reject based Markov chain Monte Carlo (MCMC) algorithms have traditionally {utilised acceptance probabilities that can be explicitly written as} a function of the ratio of the target density at the two contested points. This feature is rendered almost useless in Bayesian posteriors with unknown functional forms. We introduce a new family of MCMC acceptance probabilities that has the distinguishing feature of not being a function of the ratio of the  target density at the two points.
 We present two stable Bernoulli factories that generate events within this class of acceptance probabilities. The efficiency of our methods rely on obtaining reasonable local upper or lower bounds on the target density and we present two classes of problems where such bounds are viable: Bayesian inference for diffusions and MCMC on constrained spaces.  The resulting \textit{portkey Barker's} algorithms are exact and  computationally more efficient that the current state-of-the-art. 
\end{abstract}

\section{Introduction} 
\label{sec:introduction}


Markov chain Monte Carlo (MCMC) is a popular tool for drawing samples from complicated distributions.  For Bayesian posteriors, where MCMC is most often used, the target density is usually available only up to a proportionality constant. That is, the target density, $\pi(x)$ is such that,  $\pi(x) \propto \pi'(x)$, where the functional form of $\pi'$ is usually known. The popular MCMC algorithm of \cite{hast:1970,metr:1953} has been immensely useful as unknown normalising constants of Bayesian posteriors play no role in  the sampler. Given a proposal density $q(x,y)$, a move from $x$ to $y$ in a Metropolis-Hastings (MH) algorithm is accepted with probability
\[
\alpha_{\text{MH}}(x,y) = \min \left\{1, \dfrac{\pi(y) q(y,x)}{\pi(x) q(x,y)} \right\} = \min \left\{1, \dfrac{\pi'(y) q(y,x)}{\pi'(x) q(x,y)} \right\}\,.
\]
{On the other hand, there are many other acceptance probability functions that can be used to ensure the stationarity of $\pi$.
For instance,}
 \cite{barker:1965} proposed a similar algorithm with 
\begin{equation}
\label{eq:barker_symm}
	\alpha_{\text{B}}(x,y) = \dfrac{\pi(y) q(y,x)}{\pi(x) q(x,y) + \pi(y)q(y,x)} = \dfrac{\pi'(y) q(y,x)}{\pi'(x) q(x,y) + \pi'(y) q(y,x)}\,.
\end{equation}
Barker's algorithm is used less frequently than MH due to the result of \cite{peskun:1973}, which establishes that MH is always better than Barker's algorithm, in terms of asymptotic variance of ergodic averages. However, as shown in \cite{latu:roberts:2013}, the variance of ergodic averages from Barker's method are no worse than twice that of MH. For this reason, when the MH algorithm is difficult to implement, Barker's algorithm can be particularly important.

\cite{gon:krzy:rob:2017,gon:lat:rob} used Barker's acceptance probability for Bayesian posteriors with unknown functional forms. Such intractable posteriors can be due to intractable priors on constrained parameter spaces or intractable likelihoods for  complex systems which cannot be conveniently modeled tractably. In either case, the Bayesian posterior is not available up to a normalising constant and
$\pi(y)/\pi(x)$ cannot be evaluated making it difficult to implement the MH algorithm. There are various solutions to this proposed in the literature; from inexact algorithms like the double Metropolis-Hastings \citep{lia:liu:car:2007} to exact ones like pseudo-marginal MCMC \citep{and:rob:2009}. See \cite{park:haran} for a comprehensive review.

Nearly all methods modify the Markovian dynamics of an underlying accept-reject based MCMC algorithm. We, instead, focus on the use of Bernoulli factories to avoid explicitly calculating the acceptance probabilities, as in \cite{gon:krzy:rob:2017,gon:lat:rob}.
Their \textit{two-coin} Bernoulli factory generates events of probability $\alpha_{\text{B}}(x,y)$ without explicitly evaluating it. Thus, {an algorithm which evaluates $\alpha_{\text{B}}(x,y)$ explicitly, and one which embeds a Bernoulli factory within each iteration to determine whether to accept a proposed move, are statistically indistinguishable.}
However, as we will demonstrate, the two-coin Bernoulli factory can be computationally burdensome.

This provides strong motivation to consider other acceptance probabilities that yield $\pi$-stationarity and for which efficient Bernoulli factories can be constructed. To date, it has been customary to consider functions  $\alpha (x,y)$
which can be written explicitly in terms of $\pi (y)/\pi (x)$ so that any unknown normalisation constant cancels (see \cite{peskun:1973,tier:1994,bill:diac:2001}). {However,  when using Bernoulli factories we have far more flexibility.}
Keeping this in mind we propose an acceptance probability where for $d(x,y) \geq 0$ such that $d(x,y) = d(y,x)$, a move from $x$ to $y$ is accepted with probability
\[
\alpha(x, y) = \dfrac{\pi(y) q(y,x)}{\pi(x) q(x,y) + \pi(y) q(y,x) + d(x,y)}\,.
\]
We present two choices of $d(x,y)$ for which efficient Bernoulli factories can be constructed to generate events of probability $\alpha(x,y)$.  Naturally, $\alpha(x,y) \leq \alpha_{\text{B}}(x,y)$, so  the variance ordering of \cite{peskun:1973} applies. However, usually $d(x,y)$ will be small, so as to yield little decrease in efficiency.
As we will demonstrate in our examples, any loss in statistical efficiency is made up for in computational efficiency. In addition to a simple illustrative example, Section~\ref{sub:examples} describes two  different directions for  applications of the portkey Barker's method, both stemming from important problems in Bayesian statistics. The first area involves MCMC on constrained spaces where prior normalisation constants are unknown. As an example of this, we give the first exact MCMC method for a well-known Bayesian correlation estimation model. The second  class of applications concerns the Bayesian inference for diffusions, which we illustrate on the Wright-Fisher diffusion model. For all examples, our proposed algorithm leads to a far stable sampling process, with significant computational gains.

\section{Barker's method and the two-coin algorithm} 
\label{sec:bernoulli_factory}

Given a current state of the Markov chain $x$, and a proposal density $q(x,y)$, recall the Barker's acceptance probability in \eqref{eq:barker_symm}. Algorithm~\ref{alg:barker} presents the Barker's update for obtaining
a realization at time $m+1$. Usually, Step~2 is implemented by drawing
$U \sim U[0,1]$ and checking if $U \leq \alpha_{\text{B}}(x_m, y)$. 
However, this is not possible  when $\alpha_{\text{B}}(x_m, y)$ cannot be
evaluated. \cite{gon:krzy:rob:2017,gon:lat:rob} noticed that a Bernoulli factory can be constructed to obtain events of probability $\alpha_{\text{B}}(x_m,y)$ without explicitly evaluating it.
\begin{algorithm}[!h]
\caption{Barker's MCMC for $x_{m+1}$}\label{alg:barker}
\begin{algorithmic}[1]
\State Draw $y \sim q(x_m,dy)$
\State Draw $A \sim \text{Bern}\left(\alpha_{\text{B}}(x_m, y) \right)$
\If {$A = 1$} 
\State $x_{m+1} = y$
\EndIf
\If {$A = 0$} 
\State $x_{m+1} = x_{m}$
\EndIf
\end{algorithmic}
\end{algorithm}


The Bernoulli factory problem is one in which given events that occur with probability $p$, the goal is to simulate an event with probability $h(p$), for some function of interest, $h$ \citep{asmu:glynn:1992,keane:obrie:1994,MR2115037,MR2829311,huber:2017,morina2019bernoulli}. \cite{gon:krzy:rob:2017,gon:lat:rob} proposed the following Bernoulli factory to sample events with probability $\alpha_{\text{B}}(x,y)$. Suppose,
\[
\pi(x) q(x,y) = c_x p_x\,,
\]
where $c_x$ is possibly known and $0 < p_x < 1$. Similarly, $\pi(y) q(y,x) = c_y p_y$. The roles of $c_x$ and $c_y$ are to ensure that $p_x$ and $p_y$ are valid probabilities. We stress that the bound $c_x$ can be a local bound; a global bound over the full support is not required. For asymmetric proposal distributions, $c_x$ and $p_x$ depend on both $x$ on $y$, however we suppress the dependency on $y$ for notational convenience. One way to arrive at $c_x$ and $p_x$ is to find $c_x$ such that 
\begin{equation}
\label{eq:cxpx}
\pi(x) q(x,y) \le  c_x
\hbox{ and then set }p_x = {\pi(x) q(x,y) c_x^{-1}}, 
\end{equation}
with analogous statements for $c_y$ and $p_y$.
%
The two-coin Bernoulli factory of \cite{gon:krzy:rob:2017,gon:lat:rob} presented in Algorithm~\ref{alg:alpha1}, returns events of probability $\alpha_{\text{B}}(x,y)$ with
\[
h(p_x,p_y) := \dfrac{c_yp_y}{c_xp_x + c_yp_y} = \alpha_{\text{B}}(x,y)\,.
\]

%

\begin{algorithm}[!h]
\caption{two-coin algorithm for $\alpha_{\text{B}}(x,y)$}\label{alg:alpha1}
\begin{algorithmic}[1]
\State Draw $C_1 \sim $ \text{Bern}$\left(\dfrac{c_y}{c_y + c_x}\right)$
\If {$C_1 = 1$}
\State Draw $C_2 \sim \text{Bern}(p_y)$
\If {$C_2 = 1$} 
\State output 1
\EndIf
\If {$C_2 = 0$} 
\State {go to Step 1}
\EndIf
\EndIf
\If {$C_1 = 0$}
\State Draw $C_2 \sim \text{Bern}(p_x)$
\If {$C_2 = 1$} 
\State {output 0}
\EndIf
\If {$C_2 = 0$} 
\State{go to Step 1}
\EndIf
\EndIf
\end{algorithmic}
\end{algorithm}

First, both $c_x$ and $c_y$  can be known up to a common normalising constant. Second, it is assumed in Steps 2 and 3 that events of probabilities $p_y$ and $p_x$, respectively, can be simulated. Using laws of conditional probability, it is easy to check that Algorithm~\ref{alg:alpha1} returns 1 with probability $\alpha_{\text{B}}(x,y)$. In addition, as \cite{gon:krzy:rob:2017} describe, the number of loops until the algorithm stops is distributed as a Geom$((c_yp_y + c_xp_x)/(c_y + c_x))$, and the mean execution time is
\[
\dfrac{c_x + c_y}{c_xp_x + c_yp_y} = \dfrac{c_x + c_y}{\pi(x) q(x,y) + \pi(y) q(y,x)}\,.
\]
Clearly, the computational efficiency of the two-coin algorithm relies heavily on the upper bounds $c_x$ and $c_y$. If the bound is loose, then the algorithm yields a large mean execution time.


\section{Portkey Barker's method} 
\label{sec:portkey_barker}

The main source of inefficiency in implementing Barker's method via Bernoulli factories is the inefficiency of the two-coin algorithm. Motivated by this, we introduce a new family of acceptance probabilities and provide an efficient Bernoulli factory for members of this family. For a proposal density $q(x,y)$, consider accepting a proposed value $y$ with probability,
\begin{equation}
\label{eq:gennonrat}
\alpha(x, y) = \dfrac{\pi(y) q(y,x)}{\pi(x)q(x,y) + \pi(y)q(y,x) + d(x,y)}\,,
\end{equation}
where $d(x,y) = d(y,x) \geq 0$. Then $\alpha(x,y)$ yields a $\pi$-reversible Markov chain and the symmetry of $d(x,y)$ is essential to this.
\begin{theorem}
	\label{thm:combined_rever}
For a proposal density $q(x,y)$, a Markov chain with acceptance probability $\alpha(x,y)$ in \eqref{eq:gennonrat} is
 $\pi$-reversible if and only if $d(x,y) = d(y,x)$.
\end{theorem}
\begin{proof}
An acceptance function yields a $\pi$-reversible Markov chain if and only if
\[
\pi(y)q(y,x)\alpha(y,x) = {\pi(x)q(x,y)}  \alpha(x, y)\,.
\]
Let $d(x,y) = d(y,x)$. Consider,
\begin{align*}
{\pi(x)q(x,y)}  \alpha(x, y) & = 
 \dfrac{{\pi(y)q(y,x)} \, \pi(x)q(x,y)}{\pi(x)q(x,y) + \pi(y)q(y,x) + d(x,y)} 
= {\pi(y)q(y,x)}\alpha(y, x) \,. 
\end{align*}
\vspace{-.4cm} ~
\end{proof}
Naturally, $\alpha(x,y) \leq \alpha_{\text{B}}(x,y)$ and by Peskun's ordering, Barker's method is more efficient. However, for a particular choice of $d(x,y)$ we present a Bernoulli factory that provides significant computational gains, enough to supersede the loss of statistical efficiency. For a user-chosen $0 < \beta \leq 1$, consider the following  \textit{portkey Barker's} acceptance probability:
\begin{equation}
	\label{eqn:accept_beta}
	\alpha_{(\beta)}(x, y) := \dfrac{\pi(y) {q(y,x)}}{\pi(y){q(y,x)} + \pi(x){q(x,y)} + \dfrac{(1 - \beta)}{\beta} (c_x + c_y) }\,,
\end{equation}
{with $c_x$ and $c_y$  given by \eqref{eq:cxpx}. }
To ensure $d(x,y)$ is small, $\beta \approx 1$. 

To yield events of probability $\alpha_{(\beta)}(x, y)$, we modify the two-coin algorithm via, what we call a \textit{portkey}{\footnote{We borrow the word \textit{portkey} from the Harry Potter books by J.K. Rowling. As described on \textit{Pottermore.com}, ``The name ‘portkey’ comes from the French ‘porter’ -- to carry -- and the word ‘key’, in the sense of secret or trick''.}}  method. Our \textit{portkey two-coin} algorithm in Algorithm~\ref{alg:portkey}  introduces a first step in the two-coin algorithm that allows immediate rejections with probability $1 - \beta$. For a given proposal, whenever Algorithm~\ref{alg:portkey} loops on Steps 4a and 5a, a Bern$(\beta)$ event is drawn, which if zero, rejects the proposed value immediately. Running Algorithm~\ref{alg:portkey} with $\beta \approx 1$ avoids the large number of loops often witnessed in Algorithm~\ref{alg:alpha1}.


\begin{algorithm}[h]
\caption{Portkey two-coin algorithm}\label{alg:portkey}
\begin{algorithmic}[1]
\State Draw $S\sim$ Bernoulli$(\beta)$
\If {$S = 0$}
\State output 0
\EndIf
\If {$S = 1$}
\State Draw $C_1 \sim $ \text{Bern}$\left(\dfrac{c_y}{c_x + c_y}\right)$
\If {$C_1 = 1$}
\State Draw $C_2 \sim \text{Bern}(p_y)$
\If {$C_2 = 1$} 
	\State {output 1}
\EndIf
\If {$C_2 = 0$} 
	\State{go to Step 1}
\EndIf
\EndIf
\If {$C_1 = 0$}
\State Draw $C_2 \sim \text{Bern}(p_x)$
\If {$C_2 = 1$} 
	\State {output {0}}
\EndIf
\If {$C_2 = 0$} 
	\State {go to Step 1}
\EndIf
\EndIf
\EndIf
\end{algorithmic}
\end{algorithm}

\begin{theorem}
\label{thm:accept_combined}
Algorithm~\ref{alg:portkey} yields output 1 with probability $\alpha_{(\beta)}(x, y)$.
\end{theorem}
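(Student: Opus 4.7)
The plan is to exploit the recursive (loop) structure of Algorithm~\ref{alg:portkey} directly. Let $\alpha^{*}$ denote the probability that a single run of the algorithm terminates with output 1. I condition on what happens in a single pass through Steps 1--21: exactly one of three mutually exclusive events occurs, namely (i) output 1, (ii) output 0, or (iii) return to Step 1. Because the Bernoulli draws are independent across loop iterations, the probability of eventual output 1 given that the algorithm loops is again $\alpha^{*}$, which gives a one-equation recursion to solve.

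First I would compute the one-iteration probability of outputting 1. This requires $S=1$ (probability $\beta$), then $C_1=1$ (probability $c_y/(c_x+c_y)$), then $C_2=1$ drawn from $\text{Bern}(p_y)$ (probability $p_y$), producing $\beta c_y p_y/(c_x+c_y)$. Similarly, the probability of looping back in one iteration is $\beta[c_y(1-p_y)+c_x(1-p_x)]/(c_x+c_y)$, since looping happens only when $S=1$ and one of the two branches returns $C_2=0$. The recursion $\alpha^{*}=\beta c_y p_y/(c_x+c_y)+\bigl(\beta[c_y(1-p_y)+c_x(1-p_x)]/(c_x+c_y)\bigr)\alpha^{*}$ then yields $\alpha^{*}=\beta c_y p_y/\bigl[(1-\beta)(c_x+c_y)+\beta(c_x p_x+c_y p_y)\bigr]$, where I used $1-\beta[c_y(1-p_y)+c_x(1-p_x)]/(c_x+c_y)=[(1-\beta)(c_x+c_y)+\beta(c_x p_x+c_y p_y)]/(c_x+c_y)$.

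Finally, I would substitute the definitions $c_x p_x=\pi(x)q(x,y)$ and $c_y p_y=\pi(y)q(y,x)$, then divide numerator and denominator by $\beta$, which turns the $(1-\beta)(c_x+c_y)$ term into $\tfrac{1-\beta}{\beta}(c_x+c_y)$; the expression then matches $\alpha_{(\beta)}(x,y)$ in \eqref{eqn:accept_beta} exactly. I do not anticipate any genuine obstacle here — the argument is essentially bookkeeping on the geometric loop structure, and the only place to be careful is verifying that the recursion has a valid solution, which holds as long as the one-iteration looping probability is strictly less than 1; this is immediate because $\beta\le 1$ and $p_x,p_y<1$ guarantee that the stopping probability $(1-\beta)+\beta(c_x p_x+c_y p_y)/(c_x+c_y)$ is strictly positive.
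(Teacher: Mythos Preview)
Your proof is correct and follows essentially the same approach as the paper: both compute the one-iteration probabilities of outputting 1 and of looping, then use the geometric structure of the loop. The paper sums the geometric series $\sum_{i\ge 0} r^i$ explicitly, while you solve the equivalent fixed-point recursion $\alpha^{*}=a+r\,\alpha^{*}$; these are the same argument in slightly different dress. One tiny quibble: the positivity of the stopping probability comes from $p_x,p_y>0$ (together with $\beta>0$), not from $p_x,p_y<1$ as you wrote, but the conclusion is unaffected.
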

\begin{proof}
Let $r$ be the probability of no output in any given loop of the algorithm. Then,
\begin{align*}
r &= \beta \dfrac{c_y(1 - p_y) + c_x(1 - p_x)}{c_x + c_y} \\ 
\Rightarrow \ds \sum_{i=0}^{\infty} r^i & =  \dfrac{c_x + c_y}{(1-\beta)(c_x + c_y) + \beta (c_xp_x + c_yp_y)}\,.
\end{align*}
For the algorithm to output 1, for all $i$, there should be no output in all loops up to $i$, and the $i$th loop should output 1. Thus, the probability that the algorithm outputs 1 is
\[
\beta \dfrac{c_yp_y}{c_x + c_y} \ds \sum_{i=0}^{\infty} r^i = \dfrac{c_yp_y}{ c_xp_x + c_yp_y + \frac{1-\beta}{\beta} (c_x + c_y)}\,.  
\]
\vspace{-.5cm} ~
\end{proof}
%
%
%
%
From Theorem~\ref{thm:combined_rever}, the portkey Barker's algorithm is $\pi$-reversible, and if $\beta > 0$, it is $\pi$-ergodic since $\alpha_{(\beta)}(x,y) > 0$. It may be intuitive to see that Algorithm~\ref{alg:portkey} will quite obviously lead to a smaller mean execution time. More specifically, the number of loops until the algorithm stops is distributed according to a Geom$(s_{\beta})$, where
\[
s_{\beta} = (1 - \beta) + \beta \cdot \dfrac{c_yp_y + c_xp_x}{c_x + c_y}\,.
\]
We note that for $0 < \beta < 1$, $s_{\beta} > 1- \beta$, which is bounded away from zero. Thus, the mean execution time is bounded above. A similar argument cannot be made for the original two-coin algorithm. Specifically, the ratio of the mean execution time of the two-coin algorithm to the portkey two-coin algorithm is
\begin{align*}
\dfrac{1/s_1}{1/s_{\beta}} & = \dfrac{s_{\beta}}{s_1}= (1- \beta) \cdot \left(\dfrac{c_yp_y + c_xp_x}{c_x + c_y}  \right)^{-1} + \beta \,\,\geq \,1\,.
\end{align*}
Thus, if $ (c_yp_y + c_xp_x)/(c_x + c_y) \approx 0$, i.e.,  the original two-coin algorithm is highly inefficient, the ratio diverges to infinity. On the other hand, if $(c_yp_y + c_xp_x)/(c_x + c_y) \approx 1$, i.e., the original two-coin algorithm is efficient, the two algorithms have comparable expected number of loops.

Computational efficiency gained here is at the cost of statistical efficiency. Let $P_{(\beta)}$ and $P_{\text{B}}$ denote the Markov operators for portkey Barker's and Barker's algorithms. In addition, for a function $g$, let $\bar{g}_n$ denote the Monte Carlo estimator of $\int g \pi(dx)$ obtained using a Markov kernel $P$ and denote $\text{var}(g, P):= \lim_{n \to \infty} n\Var_{\pi}(\bar{g}_n)$.

\begin{theorem}
  	\label{thm:barker_mod_ordering}
For $0 < \beta \leq 1$, $  	\alpha_{(\beta)}(x, y) \leq \beta\,  \alpha_{\text{B}}(x, y)\,.$
%
 As a consequence,
 \begin{equation*}
 \label{eq:variance_bark_mod}
 	\text{var}(g, P_{\text{B}}) \leq \beta \,\text{var}(g, P_{(\beta)}) + (\beta - 1) \Var_{\pi}(g)\,.
 \end{equation*}
  \end{theorem}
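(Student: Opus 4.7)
The plan is to dispatch the pointwise bound $\alpha_{(\beta)}(x,y)\le \beta\,\alpha_{\text{B}}(x,y)$ by direct algebra, and then use it together with Peskun's theorem and a ``laziness" identity for asymptotic variance to obtain the variance inequality.

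For the first claim, I would clear denominators. Writing the target inequality as
\[
\pi(x)q(x,y) + \pi(y)q(y,x) \;\le\; \beta\bigl[\pi(x)q(x,y) + \pi(y)q(y,x)\bigr] + (1-\beta)(c_x + c_y),
\]
we see it reduces to
\[
(1-\beta)\bigl[\pi(x)q(x,y) + \pi(y)q(y,x)\bigr] \;\le\; (1-\beta)(c_x+c_y),
\]
which follows immediately from the definitions $\pi(x)q(x,y)\le c_x$ and $\pi(y)q(y,x)\le c_y$ in \eqref{eq:cxpx}.

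For the variance inequality, the idea is to introduce an auxiliary \emph{lazy Barker} kernel
\[
P' \;=\; \beta P_{\text{B}} + (1-\beta)\,I,
\]
which is a Metropolis-type kernel with proposal $q$ and acceptance probability $\beta\,\alpha_{\text{B}}(x,y)$. By the first part of the theorem, this acceptance probability dominates $\alpha_{(\beta)}$ pointwise, so Peskun's ordering \citep{peskun:1973,tier:1994} applied to the off-diagonal parts of the two kernels yields
\[
\text{var}(g,P') \;\le\; \text{var}(g, P_{(\beta)}).
\]

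The remaining step is to express $\text{var}(g,P')$ in terms of $\text{var}(g,P_{\text{B}})$. Using the spectral representation for reversible kernels, if $g$ has mean zero and $c_\lambda$ is the spectral measure of $g$ with respect to $P_{\text{B}}$, then $\text{var}(g,P_{\text{B}}) = \sum_\lambda \tfrac{1+\lambda}{1-\lambda}c_\lambda$ and $\sum_\lambda c_\lambda = \Var_\pi(g)$. Since $P'$ is reversible with eigenvalues $\beta\lambda + (1-\beta)$ and the same spectral measure for $g$, a short calculation gives
\[
\frac{1 + (\beta\lambda + 1-\beta)}{1 - (\beta\lambda + 1-\beta)} \;=\; \frac{1}{\beta}\cdot\frac{1+\lambda}{1-\lambda} + \frac{1-\beta}{\beta},
\]
which after summing against $c_\lambda$ yields the identity
\[
\beta\,\text{var}(g,P') \;=\; \text{var}(g,P_{\text{B}}) + (1-\beta)\Var_\pi(g).
\]
Combining this identity with the Peskun inequality above gives the result. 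The only genuinely subtle step is this spectral identity; it can alternatively be established in a more elementary manner by expanding $\text{var}(g,P')$ via the autocovariance series and using the fact that under $P'$ the number of true $P_{\text{B}}$-moves in $k$ steps is Binomial$(k,\beta)$, so that $P'^k = \sum_{j=0}^k \binom{k}{j}\beta^j(1-\beta)^{k-j}P_{\text{B}}^j$, and interchanging the order of summation.
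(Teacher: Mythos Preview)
Your proof is correct and follows essentially the same route as the paper. The pointwise bound is the same algebra (the paper writes it as a chain of inequalities on the fraction rather than clearing denominators, but the content is identical, using $c_x+c_y\ge \pi(x)q(x,y)+\pi(y)q(y,x)$). For the variance inequality, the paper simply invokes \citet[Corollary~1]{latu:roberts:2013} together with Peskun's ordering (or equivalently \citet[Theorem~2]{zane:2019}); your argument is a self-contained derivation of exactly that cited result---you introduce the lazy Barker kernel $P'=\beta P_{\text{B}}+(1-\beta)I$, compare it to $P_{(\beta)}$ via Peskun, and then establish the identity $\beta\,\text{var}(g,P')=\text{var}(g,P_{\text{B}})+(1-\beta)\Var_\pi(g)$ by the spectral calculation. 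So the structure is the same; you have just unpacked the black-box citation.
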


\begin{proof}
Since $c_x + c_y \geq \pi(x)q(x,y) + \pi(y)q(y,x)$
\begin{align*}
	 \alpha_{(\beta)}(x, y) & = \dfrac{\pi(y)q(y,x) }{\pi(x)q(x,y) + \pi(y)q(y,x) + \beta^{-1}{(1 - \beta)} (c_x + c_y) } \\
	& \leq \beta \cdot\dfrac{\pi(y)q(y,x) }{\beta\pi(x) q(x,y)+ \beta \pi(y) q(y,x) + (1 - \beta) (\pi(x) {q(x,y)}+ \pi(y) {q(y,x)}) } \\
	& = \beta \cdot \alpha_{\text{B}}(x, y)\,.
\end{align*}
Combining \citet[Corollary 1]{latu:roberts:2013} and the ordering of 
\cite{peskun:1973},  or equivalently
\citet[Theorem 2]{zane:2019}, yields,
\[
\text{var}(g, P_{\text{B}}) \leq \beta \text{var}(g, P_{(\beta)}) + ({\beta}- 1) \Var_{\pi}(g)\,. 
\]
\vspace{-.5cm}~
\end{proof}
A variance bound in the opposite direction can be obtained under specific conditions.

\begin{theorem}
	\label{thm:ordering_right_direc}
For $0 < \beta \leq 1$, if there exists $\delta > 0$ such that $p_x > \delta$ and $p_y > \delta$, then
\[
\alpha_{\text{B}}(x,y) \leq \left( 1 + \dfrac{1-\beta}{\delta \beta} \right) \cdot \alpha_{(\beta)}(x, y)\,.
\]
As a consequence,
\[
\text{var} \left(g, P_{(\beta)}\right) \leq \left( 1 + \dfrac{1-\beta}{\delta \beta} \right) \text{var}(g, P_{\text{B}}) + \dfrac{1 -\beta}{\delta \beta} \Var_{\pi}(g)\,.
\]
\end{theorem}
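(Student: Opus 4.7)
The plan is to prove the pointwise inequality first by a direct algebraic manipulation, and then invoke the same Peskun/Latuszy\'nski--Roberts machinery used in Theorem~\ref{thm:barker_mod_ordering} (with the roles of the two chains swapped) to deduce the variance bound.

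For the pointwise inequality, I would start from the factorisations $\pi(x)q(x,y)=c_xp_x$ and $\pi(y)q(y,x)=c_yp_y$ and write the ratio
\[
\frac{\alpha_{\text{B}}(x,y)}{\alpha_{(\beta)}(x,y)} = \frac{c_xp_x+c_yp_y+\tfrac{1-\beta}{\beta}(c_x+c_y)}{c_xp_x+c_yp_y} = 1 + \frac{1-\beta}{\beta}\cdot\frac{c_x+c_y}{c_xp_x+c_yp_y}.
\]
The assumption $p_x,p_y>\delta$ gives $c_xp_x+c_yp_y>\delta(c_x+c_y)$, so the last fraction is bounded above by $1/\delta$, yielding the claimed bound $\alpha_{\text{B}}(x,y)\le (1+\tfrac{1-\beta}{\delta\beta})\,\alpha_{(\beta)}(x,y)$. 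This is the one genuine computation, and I expect no difficulty here; the only substantive use of the hypothesis is precisely to keep the ratio $(c_x+c_y)/(c_xp_x+c_yp_y)$ from blowing up, which is the reason the hypothesis is needed in the first place.

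For the variance statement, I would mirror the argument used at the end of the proof of Theorem~\ref{thm:barker_mod_ordering}. There the relation $\alpha_{(\beta)}\le \beta\,\alpha_{\text{B}}$ was combined with \citet[Corollary~1]{latu:roberts:2013} (equivalently \citet[Theorem~2]{zane:2019}) together with Peskun's ordering to turn a multiplicative pointwise bound between acceptance functions into a linear relationship between asymptotic variances of ergodic averages, with a correction term involving $\Var_\pi(g)$. Applying the same result with the roles reversed, i.e.\ with $\alpha_{\text{B}}\le M\,\alpha_{(\beta)}$ for $M=1+\tfrac{1-\beta}{\delta\beta}\ge 1$, immediately gives
\[
\text{var}(g,P_{(\beta)}) \le M\,\text{var}(g,P_{\text{B}}) + (M-1)\Var_\pi(g),
\]
which is exactly the stated inequality once $M$ is substituted.

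The main (very mild) obstacle is simply to make sure the direction of the Latuszy\'nski--Roberts/Peskun argument is invoked correctly: the multiplicative factor $M\ge 1$ reflects that $P_{(\beta)}$ is the less efficient chain, so the variance appears on the left and the Barker variance on the right, the opposite orientation to Theorem~\ref{thm:barker_mod_ordering}. Beyond verifying this bookkeeping, the proof is short, and the hypothesis $p_x,p_y>\delta$ is used only in the first step to control the slack introduced by the portkey term $\tfrac{1-\beta}{\beta}(c_x+c_y)$.
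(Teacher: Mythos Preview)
Your proposal is correct and follows essentially the same route as the paper: the paper uses $p_x,p_y\ge\delta$ to bound $c_x+c_y\le\delta^{-1}(c_xp_x+c_yp_y)$ and substitutes this directly into the denominator of $\alpha_{(\beta)}$, which is algebraically equivalent to your ratio computation, and then cites \citet[Corollary~1]{latu:roberts:2013} for the variance inequality exactly as you propose.
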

\begin{proof}
Since $p_x \geq \delta$ and $p_y \geq \delta$, $c_x + c_y \leq (\pi(x)q(x,y) + \pi(y)q(y,x))/\delta$. So,
\begin{align*}
	\alpha_{(\beta)}(x,y) & 
	\geq \dfrac{\pi(y) q(y,x)}{\pi(x)q(x,y) + \pi(y)q(y,x) + \dfrac{1-\beta}{\delta \beta} (\pi(x)q(x,y) + \pi(y)q(y,x)) } \\
	&  = \left(1 + \dfrac{1 - \beta}{ \delta \beta} \right)^{-1} \alpha_{\text{B}}(x,y)\,.
\end{align*}
The variance ordering follows from \citet[Corollary
1]{latu:roberts:2013}.
\end{proof}

When $p_x$ or $p_y$ is small, it is desirable to set $\beta$ to be large. Theorem~\ref{thm:ordering_right_direc} suggests that setting $\beta = 1 - \delta$ is desirable. However, $\delta$ will typically not be available.

\section{Flipped portkey two-coin algorithm} 
\label{sec:flipped_two_coin_algorithm}

A significant challenge in implementing the original two-coin algorithm is identifying a suitable $c_x$.  When faced with a target density that is restricted to a constrained space (as in Section~\ref{sub:bayesian_correlation_coefficient}), it may be easier to lower bound $\pi(x) q(x,y)$, or in other words, upper bound $\pi(x)^{-1} q(x,y)^{-1}$. That is, suppose for $\tilde{c}_x > 0$ and $0 < \tilde{p}_x < 1$,
\[
\pi(x)^{-1}q(x,y)^{-1} = \tilde{c}_x \tilde{p}_x\,.
\]
Consider an acceptance probability denoted by $\alpha_{f, (\beta)}(x,y)$ of the form \eqref{eq:gennonrat}, with 
\begin{equation}
\label{eq:d_tilde}
d(x,y) := \tilde{d}(x,y) = \dfrac{(1-\beta)}{\beta} \dfrac{  (\tilde{c}_x + \tilde{c}_y)}{\tilde{c}_x\tilde{p}_x \tilde{c}_y  \tilde{p}_y}\,,	
\end{equation}
Algorithm~\ref{alg:flip_portkey} presents a \textit{flipped portkey} two-coin algorithm for $\alpha_{f, (\beta)}(x,y)$.


\begin{algorithm}[h]
\caption{Flipped portkey two-coin algorithm}\label{alg:flip_portkey}
\begin{algorithmic}[1]
\State Draw $S\sim$ Bernoulli$(\beta)$
\If {$S = 0$}
\State output 0
\EndIf
\If {$S = 1$}
\State Draw $C_1 \sim $ \text{Bern}$\left(\dfrac{\tilde{c}_x}{\tilde{c}_x + \tilde{c}_y}\right)$
\If {$C_1 = 1$}
\State Draw $C_2 \sim \text{Bern}(\tilde{p}_x)$
\If {$C_2 = 1$} 
	\State{output 1}
\EndIf
\If {$C_2 = 0$} 
	\State{go to Step 1}
\EndIf
\EndIf
\If {$C_1 = 0$}
\State Draw $C_2 \sim \text{Bern}(\tilde{p}_y)$
\If {$C_2 = 1$} 
	\State{output {0}}
\EndIf
\If {$C_2 = 0$} 
	\State{go to Step 1}
\EndIf
\EndIf
\EndIf
\end{algorithmic}
\end{algorithm}

\begin{theorem}
\label{thm:accept_flipped}
Algorithm~\ref{alg:flip_portkey} yields output 1 with probability 
\[
\alpha_{f,(\beta)}(x,y) = \dfrac{\pi(y) q(y,x)}{\pi(y) q(y,x) + \pi(x) q(x,y) + \dfrac{(1-\beta)}{\beta} \dfrac{  (\tilde{c}_x + \tilde{c}_y)}{\tilde{c}_x\tilde{p}_x \tilde{c}_y  \tilde{p}_y} }
\]
\end{theorem}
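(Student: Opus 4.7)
The plan is to mirror the geometric-series argument used in the proof of Theorem~\ref{thm:accept_combined}, with the only real twist being that the roles played by $(c_x,p_x)$ and $(c_y,p_y)$ are now played by reciprocals, so at the end I will need to clear denominators to match the stated form.

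First I would compute the per-loop dynamics. In a single iteration, the algorithm terminates after drawing $S\sim\mathrm{Bern}(\beta)$, $C_1\sim\mathrm{Bern}(\tilde c_x/(\tilde c_x+\tilde c_y))$, and $C_2$ from the relevant $\tilde p$-coin. Conditioning on $S=1$ and the value of $C_1$, the probability of a given loop producing no output is
\[
r \;=\; \beta\left[\frac{\tilde c_x}{\tilde c_x+\tilde c_y}(1-\tilde p_x) + \frac{\tilde c_y}{\tilde c_x+\tilde c_y}(1-\tilde p_y)\right] \;=\; \beta\,\frac{\tilde c_x(1-\tilde p_x)+\tilde c_y(1-\tilde p_y)}{\tilde c_x+\tilde c_y},
\]
while the probability of outputting $1$ in a given loop (requiring $S=1$, $C_1=1$, $C_2=1$) is $\beta\,\tilde c_x\tilde p_x/(\tilde c_x+\tilde c_y)$.

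Next I would sum over the number of no-output loops preceding termination. Since each loop is independent,
\[
P(\text{output }1) \;=\; \frac{\beta\,\tilde c_x\tilde p_x}{\tilde c_x+\tilde c_y}\sum_{i=0}^\infty r^i \;=\; \frac{\beta\,\tilde c_x\tilde p_x}{(1-\beta)(\tilde c_x+\tilde c_y)+\beta(\tilde c_x\tilde p_x+\tilde c_y\tilde p_y)},
\]
which after dividing numerator and denominator by $\beta$ becomes
\[
P(\text{output }1) \;=\; \frac{\tilde c_x\tilde p_x}{\tilde c_x\tilde p_x+\tilde c_y\tilde p_y+\tfrac{1-\beta}{\beta}(\tilde c_x+\tilde c_y)}.
\]

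Finally I would translate back to the target form. Using the defining relations $\tilde c_x\tilde p_x=1/(\pi(x)q(x,y))$ and $\tilde c_y\tilde p_y=1/(\pi(y)q(y,x))$, I multiply numerator and denominator of the previous display by $\pi(x)q(x,y)\,\pi(y)q(y,x) = 1/(\tilde c_x\tilde p_x\tilde c_y\tilde p_y)$; the numerator becomes $\pi(y)q(y,x)$, the first two terms of the denominator become $\pi(y)q(y,x)+\pi(x)q(x,y)$, and the remainder becomes the extra term $\tfrac{1-\beta}{\beta}\,(\tilde c_x+\tilde c_y)/(\tilde c_x\tilde p_x\tilde c_y\tilde p_y)$, giving $\alpha_{f,(\beta)}(x,y)$ as claimed. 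The argument is essentially routine once the correspondence between the flipped coins and the reciprocal targets is exploited; the only mild obstacle is keeping track of the reciprocal substitution cleanly so that the extra $d(x,y)$ term from \eqref{eq:d_tilde} drops out in exactly the desired form.
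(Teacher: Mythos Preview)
Your proposal is correct and follows essentially the same approach as the paper's proof: the paper simply invokes ``following the steps of Theorem~\ref{thm:accept_combined}'' to arrive at $\Pr(\text{output}=1)=\tilde c_x\tilde p_x\big/\big(\tilde c_x\tilde p_x+\tilde c_y\tilde p_y+\tfrac{1-\beta}{\beta}(\tilde c_x+\tilde c_y)\big)$ and then equates this with $\alpha_{f,(\beta)}(x,y)$, whereas you spell out both the geometric-series computation and the final reciprocal substitution explicitly.
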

\begin{proof}
Using $\pi(x)^{-1} q(x,y)^{-1} = \tilde{c}_x \tilde{p}_x $ and following the steps of Theorem~\ref{thm:accept_combined},
\begin{align*}
 \Pr(\,\text{output = 1})&= \dfrac{\tilde{c}_x \tilde{p}_x }{\tilde{c}_x \tilde{p}_x + \tilde{c}_y \tilde{p}_y + \beta^{-1}{(1-\beta)} (\tilde{c}_x + \tilde{c}_y)} = \alpha_{f,(\beta)}(x,y)\,.
\end{align*}
 \end{proof}

There are two important consequences of Theorem~\ref{thm:accept_flipped}. First, by a simple application of Theorem~\ref{thm:combined_rever}, Algorithm~\ref{alg:flip_portkey} yields a $\pi$-reversible Markov chain. Second, setting $\beta = 1$, yields the usual Barker's acceptance probability, providing a second Bernoulli factory for Barker's algorithm! 
Variance ordering results for the flipped portkey Barker's acceptance probability can be obtained as before and is given below for completeness.
\begin{theorem}
  	\label{thm:flip_barker_mod_ordering}
For $0 < \beta \leq 1$, $  	\alpha_{f,(\beta)}(x, y) \leq \beta\,  \alpha_{\text{B}}(x, y)\,.$
%
 As a consequence,
 \begin{equation*}
 \label{eq:variance_bark_mod}
 	\text{var}(g, P_{\text{B}}) \leq \beta \,\text{var}(g, P_{f,(\beta)}) + (\beta - 1) \Var_{\pi}(g)\,.
 \end{equation*}
Further, if there exists $\delta > 0$ such that $\tilde{p}_x > \delta$ and $\tilde{p}_y > \delta$, then
\[
\alpha_{\text{B}}(x,y) \leq \left( 1 + \dfrac{1-\beta}{\delta \beta} \right) \cdot \alpha_{f,(\beta)}(x, y)\,.
\]
As a consequence,
\[
\text{var} \left(g, P_{f,(\beta)} \right) \leq \left( 1 + \dfrac{1-\beta}{\delta \beta} \right) \text{var}(g, P_{\text{B}}) + \dfrac{1 -\beta}{\delta \beta} \Var_{\pi}(g)\,.
\] 
  \end{theorem}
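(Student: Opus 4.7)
The plan is to mirror the structure of Theorems~\ref{thm:barker_mod_ordering} and \ref{thm:ordering_right_direc}, with the key algebraic step being a rewriting of the additional term $\tilde{d}(x,y)$ in the denominator of $\alpha_{f,(\beta)}$ so that it is comparable to $\pi(x)q(x,y) + \pi(y)q(y,x)$. The crucial observation is that the identities $\tilde{c}_x\tilde{p}_x = \pi(x)^{-1}q(x,y)^{-1}$ and $\tilde{c}_y\tilde{p}_y = \pi(y)^{-1}q(y,x)^{-1}$ let us unfold
\[
\frac{\tilde{c}_x + \tilde{c}_y}{\tilde{c}_x\tilde{p}_x\,\tilde{c}_y\tilde{p}_y} \;=\; \frac{1}{\tilde{p}_y\,\tilde{c}_x\tilde{p}_x} + \frac{1}{\tilde{p}_x\,\tilde{c}_y\tilde{p}_y} \;=\; \frac{\pi(x)q(x,y)}{\tilde{p}_y} + \frac{\pi(y)q(y,x)}{\tilde{p}_x}\,.
\]
This identity is the engine for both inequalities, and is where the whole proof really happens; everything else is bookkeeping analogous to the non-flipped case.

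For the first inequality, I would use $\tilde{p}_x,\tilde{p}_y \in (0,1)$ to lower-bound the display above by $\pi(x)q(x,y) + \pi(y)q(y,x)$. Substituting into the denominator of $\alpha_{f,(\beta)}$ gives
\[
\alpha_{f,(\beta)}(x,y) \;\leq\; \frac{\pi(y)q(y,x)}{(1 + (1-\beta)/\beta)\bigl(\pi(x)q(x,y) + \pi(y)q(y,x)\bigr)} \;=\; \beta\,\alpha_{\text{B}}(x,y)\,,
\]
which is the desired bound. The variance statement then follows verbatim from the combination of \citet[Corollary 1]{latu:roberts:2013} and Peskun's ordering, exactly as in Theorem~\ref{thm:barker_mod_ordering}; no new argument is needed.

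For the second inequality, the hypothesis $\tilde{p}_x,\tilde{p}_y > \delta$ upgrades the same identity into an upper bound
\[
\frac{\tilde{c}_x + \tilde{c}_y}{\tilde{c}_x\tilde{p}_x\,\tilde{c}_y\tilde{p}_y} \;\leq\; \frac{1}{\delta}\bigl(\pi(x)q(x,y) + \pi(y)q(y,x)\bigr)\,.
\]
Plugging this into the denominator of $\alpha_{f,(\beta)}$ and factoring $\pi(x)q(x,y)+\pi(y)q(y,x)$ yields $\alpha_{f,(\beta)}(x,y) \geq (1 + (1-\beta)/(\delta\beta))^{-1}\alpha_{\text{B}}(x,y)$, which rearranges to the claim. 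The corresponding variance inequality follows from \citet[Corollary 1]{latu:roberts:2013} in exactly the same manner as Theorem~\ref{thm:ordering_right_direc}.

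I do not expect a serious obstacle: the only subtlety is recognising that the flipped bounds $\tilde{c}_x,\tilde{p}_x$ enter the additional term $\tilde{d}(x,y)$ through their product and sum in precisely the way that makes the rewriting above collapse to $\pi(x)q(x,y)$ and $\pi(y)q(y,x)$ terms. Once that is in hand, both directions are one-line estimates and the variance conclusions are immediate citations of the same results used in Theorems~\ref{thm:barker_mod_ordering} and \ref{thm:ordering_right_direc}.
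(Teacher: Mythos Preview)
Your proposal is correct and follows exactly the route the paper intends: the paper does not spell out a proof of Theorem~\ref{thm:flip_barker_mod_ordering} but simply states that the result ``can be obtained as before,'' i.e., by mirroring Theorems~\ref{thm:barker_mod_ordering} and~\ref{thm:ordering_right_direc}. Your key identity
\[
\frac{\tilde{c}_x + \tilde{c}_y}{\tilde{c}_x\tilde{p}_x\,\tilde{c}_y\tilde{p}_y}
= \frac{\pi(x)q(x,y)}{\tilde{p}_y} + \frac{\pi(y)q(y,x)}{\tilde{p}_x}
\]
is precisely the flipped analogue of the inequalities $c_x+c_y \ge \pi(x)q(x,y)+\pi(y)q(y,x)$ and $c_x+c_y \le \delta^{-1}\bigl(\pi(x)q(x,y)+\pi(y)q(y,x)\bigr)$ used there, and the variance conclusions are drawn from the same cited results.
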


In addition to the availability of lower bounds on $\pi(x) q(y,x)$, the flipped portkey algorithm should be chosen if generating $\tilde{p}_x$ coins is easier than generating $p_x$ coins. This advantage is demonstrated clearly in Section~\ref{sub:bayesian_correlation_coefficient}.


\section{Examples} 
\label{sub:examples}
\subsection{Gamma mixture of Weibulls} 
\label{sub:gamma_mixture_of_weibulls}

Consider a target distribution of the form $\pi(\theta) = \int \pi(\theta | \lambda)  \, \nu(d\lambda)$
where $\nu$ is a mixing measure on $\lambda$. We assume that $\theta | \lambda \sim $ Weibull$(\lambda, k)$, where $\lambda$ is the scale parameter and $k$ is a known shape parameter. The proposal distribution is the normal distribution centered at the current step with variance .001. It can be shown that, $ \pi(\theta | \lambda)  \leq {k}/({e \theta}):= c_{\theta}$. 
Then $\pi(\theta) = c_{\theta} \cdot \pi(\theta)/c_{\theta}$ and  events of probability $p_{\theta} = \pi(\theta)/c_{\theta}$ can be simulated. Specifically, draw $\lambda \sim \nu$ and $U \sim U[0,1]$ independently. Then $\Pr\{U \leq\pi(\theta | \lambda)/c_{\theta}\} = p_{\theta}$. We set $\nu = $ Gamma$(10,100)$  and $k = 10$. 

 We run Barker's and portkey Barker's algorithms for $10^5$ steps using the two Bernoulli factories for various values of $\beta$.  Trace plots and autocorrelation plots from one such run are shown in Figures~\ref{fig:trace} and \ref{fig:acf}, respectively. As expected, for smaller values of $\beta$, the acceptance probability is smaller, but the autocorrelations do not increase drastically.
\label{sec:toy_example}
 \begin{figure}[!h]
 	\centering
  \vspace{-.2cm}
 	\includegraphics[width=.60\textwidth]{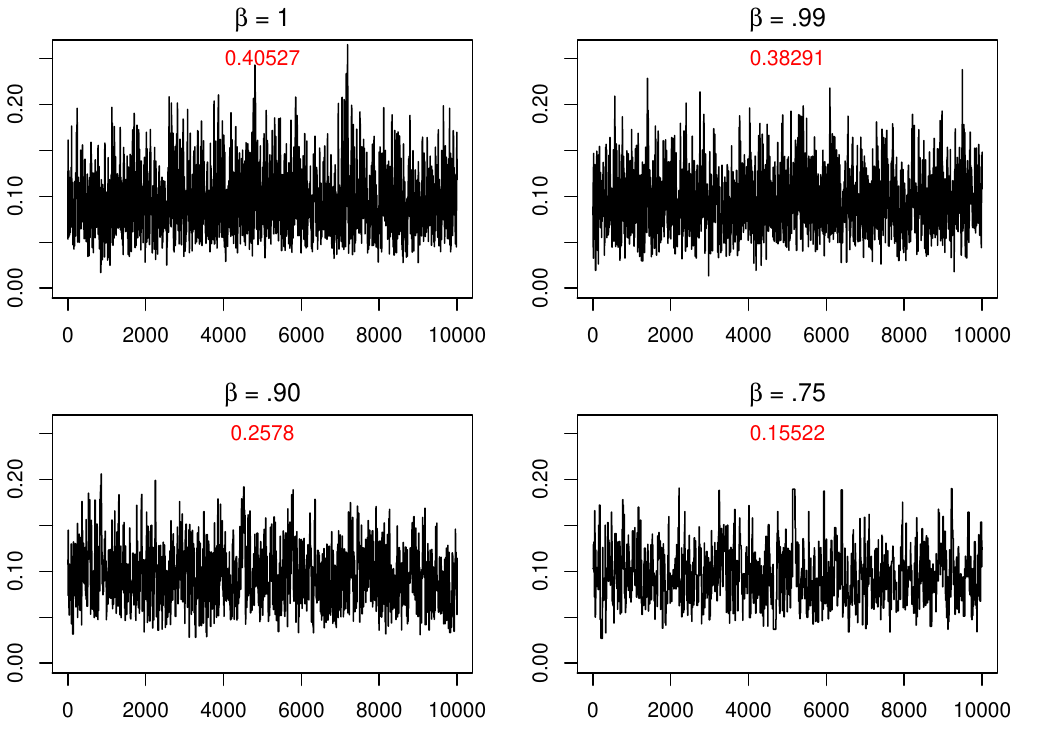} \vspace{-.4cm}
 	\caption{Trace plots of the last 1000 steps of the chains for four values of $\beta$, with acceptance probabilities are in text.}
 	\label{fig:trace}
 \end{figure}
\begin{figure}[!h]
\vspace{-.5cm}
 	\centering
 	\includegraphics[width=0.45\textwidth]{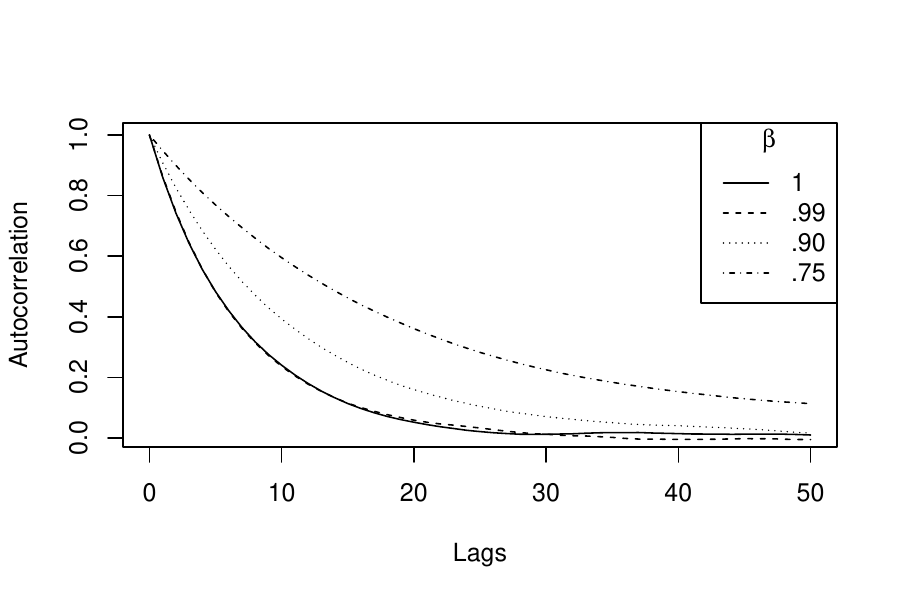} \vspace{-.4cm}
 	\caption{ACF plots for a run-length of 1e5 for four $\beta$ values.}
 	\label{fig:acf}
 \end{figure}

We repeat the simulation 1000 times to  compare their performance. From the results in Table~\ref{tab:mix_results}, unsurprisingly the effective sample size (ESS) \citep{gong:fleg:2015} decreases as $\beta$ decreases. However, the ESS per second is significantly higher for portkey Barker's compared to Barker's. In fact, for $\beta = .90$, the portkey Barker's method is almost three times as efficient as the original Barker's method. This is a clear consequence of the large mean execution time of the two-coin algorithm. The two-coin algorithm required an average of 32 loops, while the portkey  algorithms were significantly smaller. More notably, the two-coin algorithm's loops demonstrates heavy tailed behavior where the average maximum number of loops for MCMC run is 1.3 million.
 \begin{table}[!h]
 	\caption{Averaged results from 1000 replications. Standard errors are in brackets.}
 	\label{tab:mix_results}
 	\centering
 	\begin{tabular}{|l|rrrr|}
 	\hline
 	$\beta $& 1 & .99 & .90 & .75 \\ \hline
ESS & 7484 \tiny{(7.74)} &  6939 \tiny{(12.18)} & 4320 \tiny{(13.86)} & 2501 \tiny{(9.19)} \\
ESS$/s$ & 422.47 \tiny{(3.87)} & 1052.66  \tiny{(2.09)}& 1248.97 \tiny{(4.15)} & 1159.68 \tiny{(4.39)} \\
Mean loops & 32.00 \tiny{(3.7)} &  7.63 \tiny{(0.0)} &  3.97 \tiny{(0.0)}& 2.55 \tiny{(0.0)}\\
Max loops & 1315683 \tiny{(366763.76)}  &   604 \tiny{(3.44)}  &    78   \tiny{(0.37)} &   32  \tiny{(0.12)}\\ \hline
 	\end{tabular}
 \end{table}

\subsection{MCMC on constrained spaces} 
\label{sub:bayesian_correlation_coefficient}

Consider the following general setup of \cite{liech:mull:2009}. Let $f(y |  \theta)$ be a likelihood and for a set $\mathcal{A}$, let $\pi(\theta | \eta)$ be the prior on $\theta$ constrained in $\mathcal{A}$, so that  for a function $g(\cdot | \eta)$,
\[
\pi(\theta | \eta) = \dfrac{g(\theta | \eta) I(\theta \in \mathcal{A})}{\int_{\mathcal{A}} g(\theta | \eta) d \theta}\,,
\]
where $\int_{\mathcal{A}} g(\theta |\eta) d \theta$ is not tractable. If $\eta$ is a fixed hyperparameter, this intractability is not an issue. However, for a full Bayesian model, it is desirable to assign a hyperprior to $\eta$, and this leads to intractability in the posterior distribution for $(\theta, \eta)$. \cite{chen:2010} identify a shadow prior technique that modifies the Bayesian posterior to yield approximate inference. We describe this technique as it relates to a Bayesian correlation estimation model and  demonstrate how the flipped portkey algorithm can be used very naturally in this setting.  

As a general recipe, if one can find a set $\mathcal{B} \supset \mathcal{A}$ such that $\int_{\mathcal{B}} g(\theta | \eta) d\theta$ is tractable, then $\pi(\theta | \eta)$ can be lower bounded easily. The efficiency of the flipped portkey algorithm depends on the relative sizes of sets $\mathcal{B}$ and $\mathcal{A}$. If $g(\theta|\eta)$ is a well-defined density on a larger space, then  $\int_{\mathcal{A}} g(\theta | \eta) d\theta \leq 1$ and depending on the constraint set $\mathcal{A}$, may lead to an efficient flipped portkey algorithm. Such a construction can be done for the  correlation estimation models of \cite{hart:groen:2020,liech:mull:2009,phil:li:2014,wang:pillai:2013}, for the Bayesian graphical lasso \citep{wang2012bayesian}, and for Bayesian semiparametric regression \citep{pap:mar:2020}.

Consider the Bayesian common correlation estimation model of \cite{liec:mull:2004}. Suppose $y_1, \dots, y_n | R \overset{iid}{\sim} N(0, R)$ where $R$ is a $p \times p$ correlation matrix. \cite{liec:mull:2004} assume the unique elements in $R$, $r_{ij}$,  are normally distributed, restricted to $R$ being positive-definite. That is, let $S_p^+$ be the set of $p \times p$ positive-definite matrices, then 
\begin{align*}
f(R \mid \mu, \sigma^2) &= L(\mu, \sigma^2) \prod_{i < j} \dfrac{1}{\sqrt{2 \pi\sigma^2}}\exp \left\{-\dfrac{(r_{ij} - \mu)^2}{2\sigma^2} \right\} \mathbb{I}\{R \in S_p^+ \}\,, \text{ where } \\
L^{-1}(\mu, \sigma^2) &= \int_{R \in S_p^+} \prod_{i < j} \dfrac{1}{\sqrt{2 \pi\sigma^2}}\exp \left\{-\dfrac{(r_{ij} - \mu)^2}{2\sigma^2} \right\} dr_{ij} \numberthis \label{eq:corr_const}
\end{align*}
%
%
%
%
is typically not available in closed form due to the complex nature of $S^+_p$. For known $\tau^2, a_0, b_0$,  hyperpriors for $\mu \sim N(0, \tau^2)$ and $\sigma^2 \sim IG(a_0, b_0)$ are chosen. Interest is in the posterior distribution for $(R, \mu, \sigma^2)$ and \cite{liec:mull:2004} implement a component-wise Metropolis-within-Gibbs sampler. Let $l = p(p-1)/2$. The full conditional densities are
\begin{align*}
f(r_{ij} \mid r_{-ij}, \mu, \sigma^2) &\propto |R|^{-n/2} \exp \left\{-\dfrac{\text{tr}(R^{-1} Y^T Y)}{2}  \right\} \exp \left\{-\dfrac{(r_{ij} - \mu)^2}{2\sigma^2}  \right\} \mathbb{I}\{l_{ij} \leq r_{ij} \leq u_{ij} \}\,,\\ 
f(\mu \mid R, \sigma^2) & \propto L(\mu, \sigma^2) \prod_{i < j} \exp \left\{- \dfrac{(r_{ij} - \mu)^2}{2\sigma^2}  \right\} \exp \left\{- \dfrac{\mu^2}{2\tau^2}  \right\} := L(\mu, \sigma^2) g(\mu , R, \sigma^2) \,, \\ 
f(\sigma^2 \mid R, \mu) & \propto L(\mu, \sigma^2) \prod_{i < j} \exp \left\{- \dfrac{(r_{ij} - \mu)^2}{2\sigma^2}  \right\}  \left(\dfrac{1}{\sigma^2} \right)^{a_0 + l/2 + 1} \exp \left\{ -\dfrac{b_0}{\sigma^2} \right\}\,,
\end{align*}
where the indicator variable  $\mathbb{I}\{l_{ij} \leq r_{ij} \leq u_{ij} \}$ ensures positive-definiteness of $R$. The interval $(l_{ij}, u_{ij})$ can be obtained deterministically using the methods of \cite{barn:mccu:meng:2000}. We use Metropolis-Hastings update with a Gaussian proposal for each $r_{ij}$. 

Updating $\mu$ and $\sigma^2$ requires the knowledge of $L(\mu, \sigma^2)$ which is unavailable. \cite{liec:mull:2004} implement a shadow prior technique that interjects the Bayesian hierarchy such that $r_{ij} \sim N(\delta_{ij}, v^2)$ and $\delta_{ij} \sim N(\mu, \sigma^2)$, with unchanged hyper-priors on $\mu$ and $\sigma^2$. The resulting marginal posterior of $(R, \mu, \sigma^2)$ is different from the original model. The full conditionals of $\mu$ and $\sigma^2$ are available in closed-form, but the full conditionals for $\delta$ are intractable. However, \cite{liech:mull:2009} argue that for updating $\delta_{ij}$, if $v^2 \approx 0$, the intractable constants can be ignored. Thus, the methodology, although convenient, is not asymptotically exact since the resulting Markov chain targets an approximation of a modified desired distribution.


The flipped portkey two-coin algorithm can easily be implemented for both $\mu$ and $\sigma^2$ and we only present details for $\mu$. We use Gaussian random walk proposals for both components so that we can ignore the proposal density in the Bernoulli factories. 
Although $L^{-1}(\mu, \sigma^2) \leq 1$, we can obtain a tighter upper bound by noting that  each $r_{ij} \in [-1,1]$ and thus
\[
  g(\mu , R, \sigma^2)^{-1} L^{-1}(\mu, \sigma^2) \leq g(\mu , R, \sigma^2)^{-1}\left[ \Phi\left(\sigma^{-1}({1 - \mu}) \right) - \Phi\left(\sigma^{-1}({-1 - \mu})\right) \right]^l := \tilde{c}_{\mu}
\]
Set $f(\mu | R, \sigma^2)^{-1}   = \tilde{c}_{\mu} \tilde{c}_{\mu}^{-1}f(\mu \mid R, \sigma^2)^{-1}   := \tilde{c}_{\mu}   \tilde{p}_{\mu}$.  Generating coins of probability $\tilde{p}_{\mu}$ is straightforward:  draw $z_{ij} \sim \text{TN}(-1,1, \mu, \sigma^2)$ for $i < j$, and construct matrix $Z$ with $z_{ij}$ as the lower-triangular entries. If $Z$ is positive-definite, return 1, else return 0. Notice here that the use of the flipped portkey algorithm over the portkey algorithm makes it much easier to simulate the $p$ coin since $L^{-1}(\mu, \sigma^2)$ takes the form of an integral.

Consider the daily closing prices of major European stocks: Germany DAX, Switzerland SMI, France CAC, and United Kingdom FTSE on each day, not including weekends and holidays from 1991-1998. The data are available in the \texttt{datasets} \texttt{R} package  and has 1860 observations. The goal is to estimate the correlation matrix of the four stock prices using the model specified above.

We set $\beta = .90$ for both the updates for $\mu$ and $\sigma^2$ and first compare the estimated posterior density of $\mu$ and $\sigma^2$ from a run of length $10^5$ for both the shadow prior method and the two Bernoulli factory MCMC algorithms. Figure~\ref{fig:shadow} presents the results, where it is evident that the shadow prior method yields a biased posterior distribution.
\begin{figure}[htbp]
	  \vspace{-.5cm}
  \centering
	\includegraphics[width = 4in]{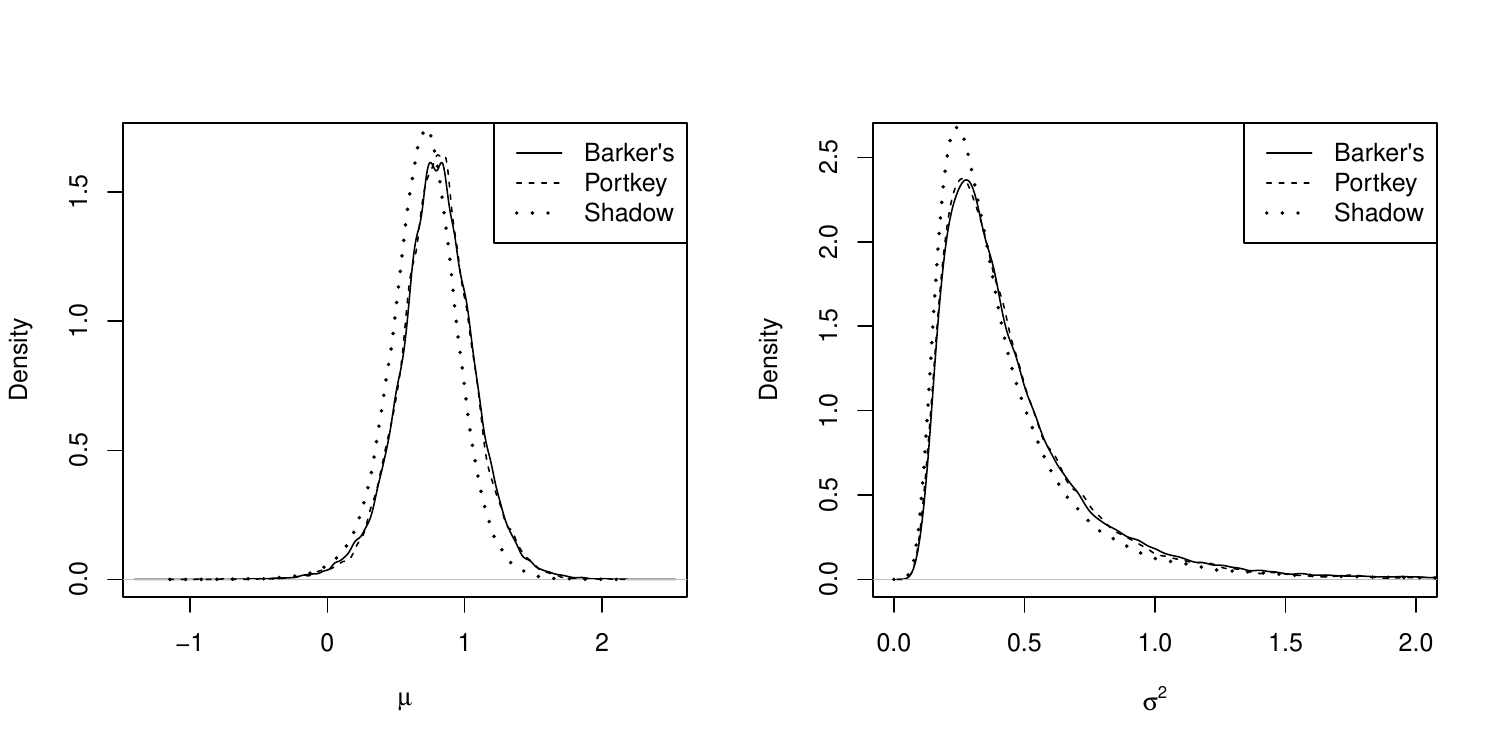}\vspace{-.5cm}
	\caption{Estimated posterior density of $\mu$ and $\sigma^2$ from Markov chain lengths of $10^5$ with $v^2 = .001$ as recommended by \cite{chen:2010}.}
	\label{fig:shadow}
\end{figure}

Figure~\ref{fig:corr_acf} presents the autocorrelation  and trace plots for standard Barker's ($\beta = 1$) and the flipped portkey Barker's ($\beta = .90$). Portkey Barker's algorithm leads to  only slightly slower mixing of the chain. However, the (log of) the number of Bernoulli factory loops required for the portkey two-coin algorithm is far less. Particularly for updating $\mu$, the original two-coin algorithm requires an exponentially large number of Bernoulli factory loops whenever unlikely values are proposed. This makes it laborious to tune Barker's algorithm and encourages short jumps. 
\begin{figure}[htbp]
	\centering
	\vspace{-.6cm} \hspace{-.6cm}
	\includegraphics[width = 1.99in]{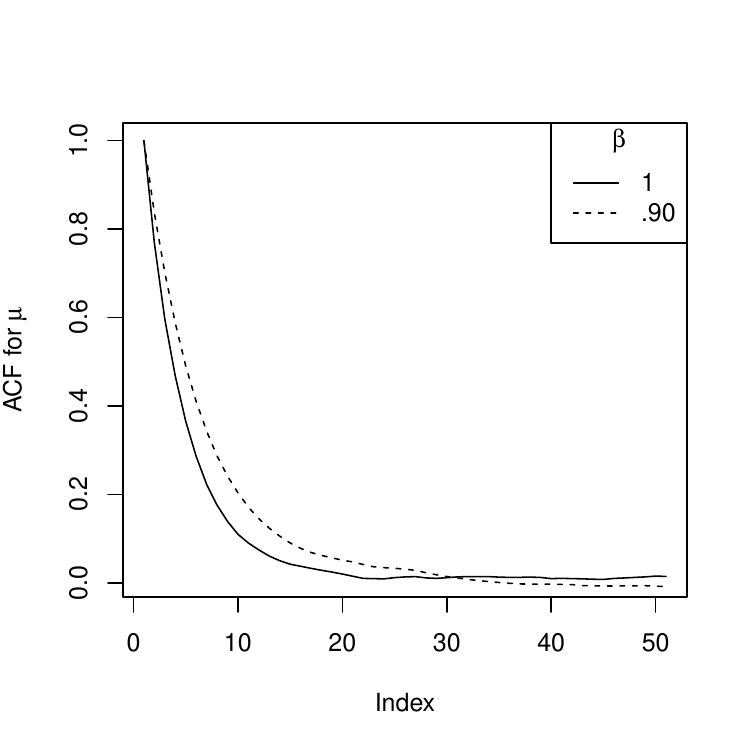} \hspace{-.5cm}
	\includegraphics[width = 1.99in]{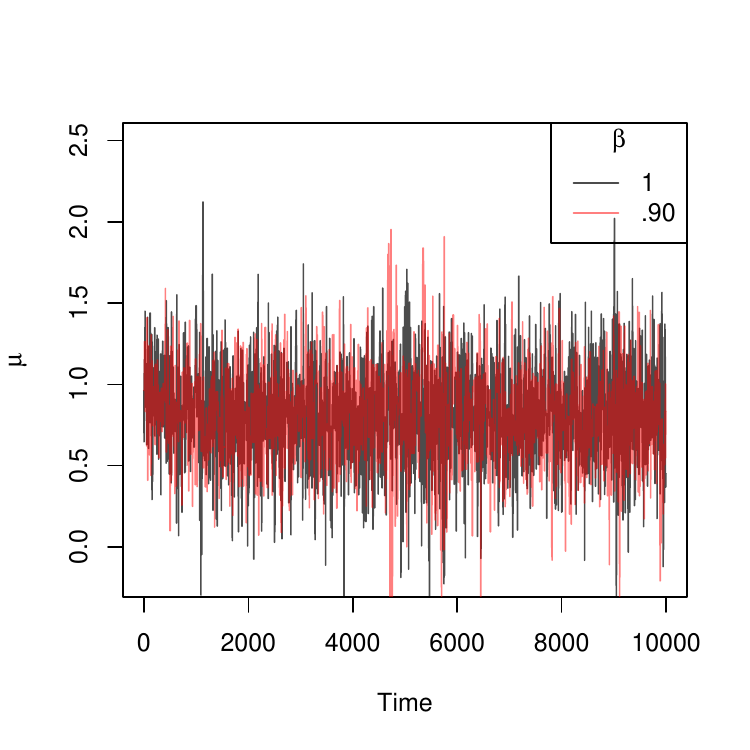}\hspace{-.3cm}
	\includegraphics[width = 1.99in]{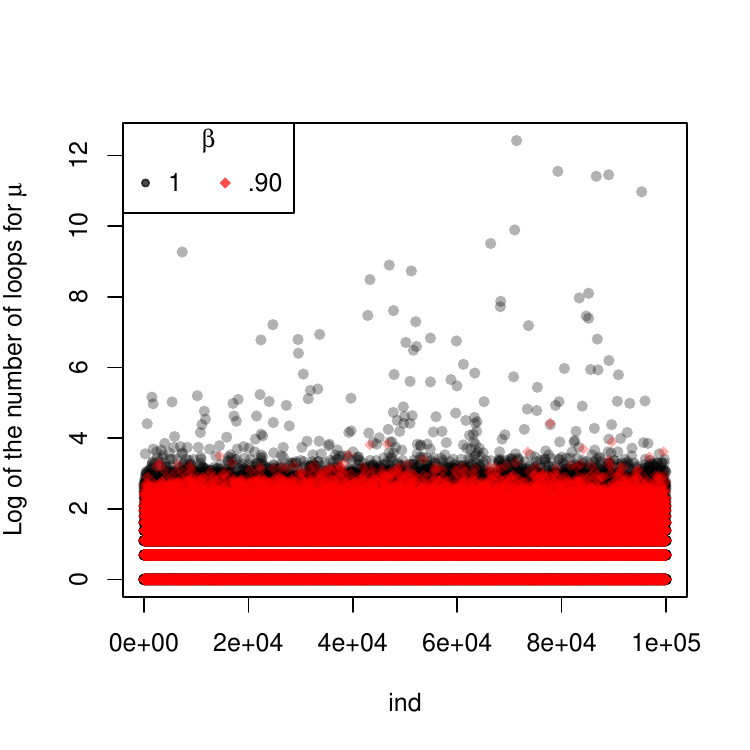}\hspace{-.3cm} \\  \vspace{-.5cm}
	\hspace{-.6cm}
	\includegraphics[width = 1.99in]{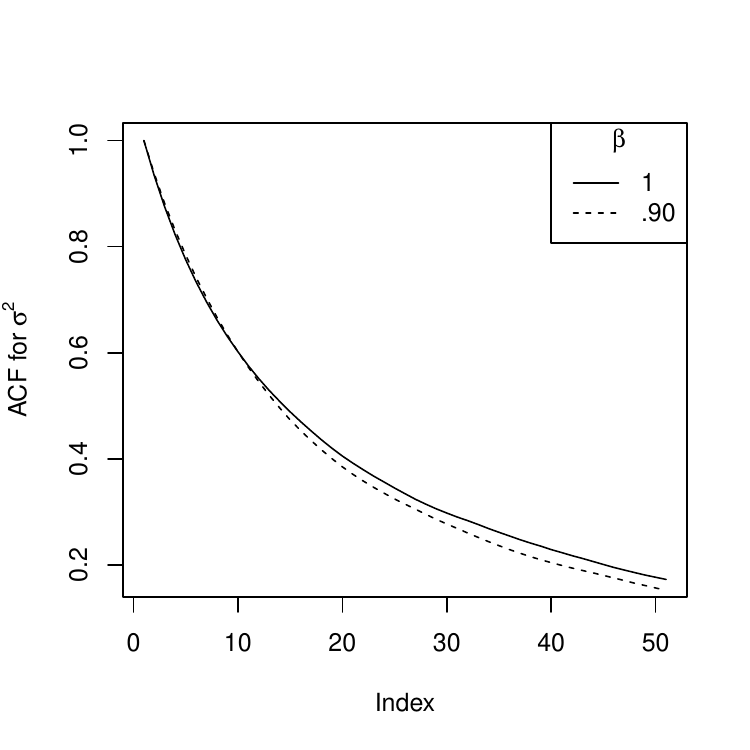} \hspace{-.5cm}
	\includegraphics[width = 1.99in]{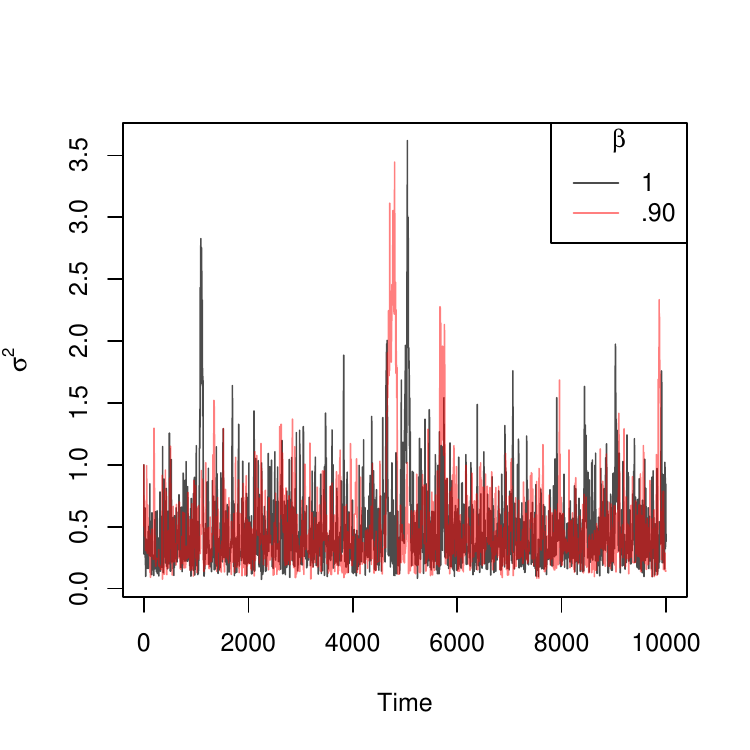}  \hspace{-.3cm}	 
	\includegraphics[width = 1.99in]{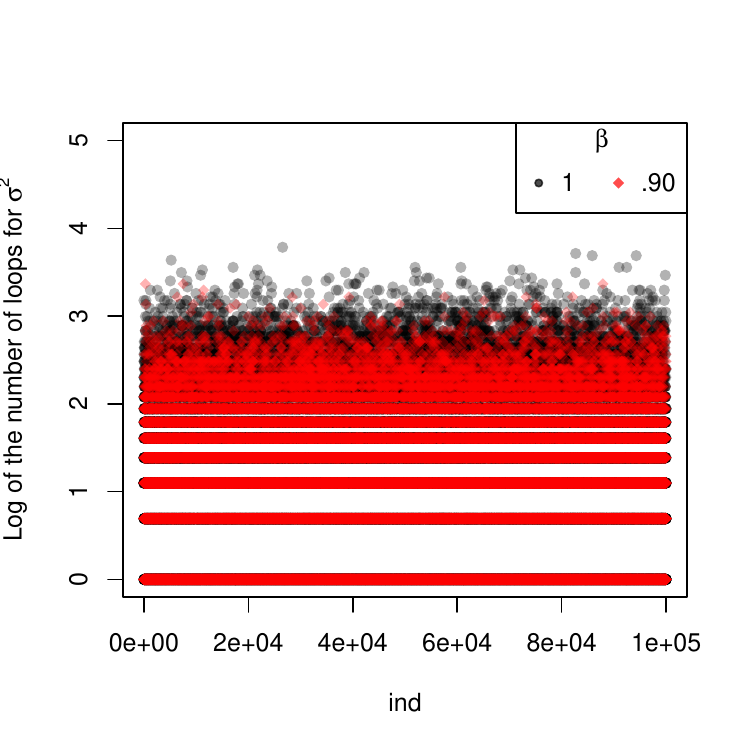}	\hspace{-.3cm}   \vspace{-.5cm} 
	\caption{ACF, trace, and (log of) Bernoulli factory loops for Barker's and portkey Barker's for $\mu$ (right) and $\sigma^2$ (left). }
	\label{fig:corr_acf}
\end{figure}

We repeat the above experiment 10 times for MCMC runs of length $10^4$. Ideally we would have liked to increase the Monte Carlo replication size, but the original two-coin algorithm often got stuck in a loop for days.  The results are presented in Table~\ref{tab:corre_results} with estimated ESS \citep{vats:fleg:jones:2019} calculated using the \texttt{R} package \texttt{mcmcse} \citep{fleg:hugh:vats:2017}. Although a regular Barker's implementation yields a slightly higher ESS, the ESS per second for the flipped portkey Barker's algorithm is about 1.5 times higher on average. As further demonstrated in the table, this is clearly a consequence of the large number of loops required for the two-coin algorithm for updating $\mu$.

 \begin{table}[!h]
 	\caption{Averaged results from 10 replications. Standard errors are  in brackets.}
 	\label{tab:corre_results}
 	\centering
 	\begin{tabular}{|l|rr|}
 	\hline
 	$\beta $& 1 & .90  \\ \hline
ESS & 542 \tiny{(13.50)} &  496 \tiny{(9.00)} \\
ESS$/s$ & 9.63 \tiny{(1.992)} & 14.83  \tiny{(0.279)} \\
Mean loops $\mu$ & 218.43 \tiny{(148.89)} &  2.99 \tiny{(0.010)}\\
Mean loops $\sigma^2$ & 3.21 \tiny{(0.02)} &  2.49 \tiny{(0.010)}\\
Max loops $\mu$ & 2084195 \tiny{(1491777)}  &   34 \tiny{(2.94)} \\ 
Max loops $\sigma^2$ & 38 \tiny{(1.13)}  &   27 \tiny{(1.51)} \\  \hline
 	\end{tabular}
 \end{table}

We stress that for a fair comparison, we use the same proposal distributions for both Markov chains. Since the original two-coin algorithm gets stuck in seemingly infinite loops for  proposals encouraging larger jumps, we are unable to tune the Barker's algorithm to the recommended 15.8\%  acceptance rate of \cite{agr:vats:lat:rob:2021}. To highlight this point, one run of approximately optimally tuned Barker's two-coin algorithm did not produce even $10^3$ samples in 24 hours, whereas the same proposal for the portkey two-coin algorithm yielded $10^4$ draws in $\approx$ 40 seconds with an ESS of 514. 

As described previously, such a Bernoulli factory MCMC algorithm can also be constructed for other models. For instance, consider the Bayesian graphical lasso problem of \cite{wang2012bayesian}. Let $\Omega$ be a $p \times p$ covariance matrix with elements $w_{ij}$; the Bayesian graphical lasso models the observed data as $y_i |\Omega \overset{iid}{\sim} N(0, \Omega^{-1})$ for $i = 1, \dots, n$ with the prior
\begin{align*}
  \Omega \mid \lambda & \sim C_{\lambda}^{-1}  \left\{\prod_{i < j}  \text{DE}(w_{ij} | \lambda_{ij})  \prod_{i = 1}^{p} \text{Exp}(w_{ii} | \lambda_{ii}/2) \right\} \mathbb{I}\{\Omega \in S^+_p\}\,,
\end{align*}
where DE denotes the double exponential distribution. 
For a fully Bayesian paradigm and element-dependent penalization, \cite{wang2012bayesian} discuss the challenges in assigning separate hyper-priors on each of the $\lambda_{ij}$,
in that the normalizing constant $C_{\lambda}$ remains unknown. Since $C_{\lambda} \leq 1$, we can obtain a lower bound on the joint posterior distribution of $(\Omega, \lambda)$ to implement the flipped portkey two-coin algorithms. This bound is likely to not scale well with $p$, so that the efficiency of the Bernoulli factory algorithms will be affected in high-dimensions, however, the data size $n$, will typically have no affect on the computational efficiency.  

In general, the flipped two-coin algorithm with $\beta = 1$ may work reasonably well if the proposal distribution makes small jumps. This, of course, increases the autocorrelation of the resulting Markov chain. If jump sizes are increased, the original two-coin algorithm will require a large number of loops in the Bernoulli factory. The portkey trick stabilizes this behavior, with little loss in statistical efficiency, and often an overall gain in the robustness of the algorithm. 

\subsection{Bayesian inference for the Wright-Fisher diffusion} 
\label{sec:wright_fisher_diffusion}

A collection of methodologies to perform exact inference for discretely-observed diffusions has been proposed in the last 15 years \citep[see][]{bpr06a,AoS,sermai}. These are all based on algorithms for exact simulation of diffusions \citep[][]{bpr07}. Although these methodologies can be applied to a wide class of univariate diffusion models, some important ones are left out. For example, the Cox-Ingersoll-Ross model used to describe the evolution of interest rates, and the Wright-Fisher diffusion, which is widely used in genetics to model the evolution of the frequency of a genetic variant or allele, in a large randomly mating population.

 The methodologies for exact inference for jump-diffusion models  based on the exact simulation of jump-diffusions are considerably more restrict \citep[see][]{gon:lat:rob,flavio2}. Let $\alpha$ be the drift function of the Lamperti transform of the original diffusion. Whilst for diffusion models $\alpha^2+\alpha'$ is required to be bounded below, for jump-diffusions it is also required to have uniformly bounded drift and jump rate of the Lamperti transformed process.

As demonstrated in \cite{gon:krzy:rob:2017,gon:lat:rob}, the two-coin Barker's MCMC algorithm performs exact inference for diffusions and jump-diffusions without requiring the above conditions. The Portkey two-coin Barker's MCMC proposed here will typically provide considerable gain when compared to the original two-coin Barker's, allowing for a wider applicability of the methodology in terms of model complexity and data size.

Consider the neutral Wright-Fisher family of diffusions with mutations
\begin{equation}
\label{eq:wright_fisher}
	dY_s = .5 (\theta_1 (1 - Y_s) - \theta_2 Y_s) + \sqrt{Y_s(1 - Y_s)} dW_s \quad \quad Y_0 = y_0,  \quad \theta_1, \theta_2 > 0\,,
\end{equation}
where $W_s$ is a standard Brownian motion. The process $Y$ lives in
$[0,1]$ and we assume it is observed
at a finite set of time points, $0 = t_0, t_1, \dots, t_n = T$ so that
the observed data are $Y_{\text{obs}} = (y_0,y_1, \dots,
y_{n})^T$. The parameters of interest are $(\theta_1, \theta_2)$ which
describe the drift coefficient of the process and Bayesian estimation yields desirable theoretical properties \citep{sant2020convergence}. Yet, their  transition densities $p(\gamma
|  y_{i-1}, y_{i}):= \Pr_{\gamma}(Y_{t_i} \in dy_i | Y_{t_i - 1} =
y_{i-1})/dy_i$, that determine the likelihoods, are not
available in closed-form and are characterised by  poor
numerical properties near the $0$ and $1$ barriers 
\citep[see][]{jenk:span:2017} that undermine reliable Bayesian inference based on approximations. 



Inference will be done for the following reparameterization $\gamma_1 = \theta_1 + \theta_2$ and $\gamma_2 = \theta_1/(\theta_1 + \theta_2)$,
with uniform priors on $\gamma = (\gamma_1, \gamma_2)$. The first
step is to apply the Lamperti transform to $Y$ in
\eqref{eq:wright_fisher}, which leads to a unit diffusion coefficient
process $X$. Now we specify a Gibbs sampler alternating between updating
 $\gamma$ and updating the missing paths, $X_{\text{mis}}$, of the transformed diffusion $X$ given the transformed observations~$X_{\text{obs}}$.
For the $X_{\text{mis}}$ update, we use standard Brownian bridge
proposals restricted to $[0,1]$ and implement both Barker's and
portkey Barker's MCMC. The Brownian bridges are
 sampled using a layered Brownian bridge construction, which
allows lower and upper bounds on the likelihood that are crucial to
devise an efficient Bernoulli factory. 

For updating $\gamma$, component-wise updates are done with uniform random walk proposal
distributions, $U(\gamma_1 \pm 0.3)$ and $U(\gamma_2 \pm
0.01)$ for $\gamma_1$ and $\gamma_2$, respectively.  
It can be shown that
$
\pi(\gamma |X_{\text{mis}}, X_{\text{obs}}) = c_{\gamma} \, p_{\gamma}\,,
$
where the forms of $c_{\gamma}$ and $p_{\gamma}$ can be found in
\cite{gon:krzy:rob:2017}. A realization of probability $p_{\gamma}$ is
obtained using the layer refinement strategy described in
\cite{gon:lat:rob}. Here each interval $(t_{i-1}, t_{i})$ is broken
into refinements, and a layered Brownian bridge is constructed within
each refinement. Without the refinement, the bound $c_{\gamma}$ is too
loose and thus $p_{\gamma}$ is often too small for the Bernoulli
factory to be efficient. However, finer refinements require simulating
more layered Brownian bridges leading to much larger computation times
to even draw from the proposal distribution. It is precisely this
characteristic of the sampling process that dictates the superior
performance of the portkey two-coin algorithm.

We simulate a Wright-Fisher diffusion with $\gamma_1 = 8$ and $\gamma_2 = 0.5$ and observe the process at times $\{0, 1, \dots, 50\}$. 
We use the component-wise sampler described above to sample from the posterior distribution of $(\gamma_1, \gamma_2)$. We run the two samplers for 10000 steps where the Barker's algorithm takes about 34.5 hours and the portkey Barker's algorithm takes about 3.5 hours.
%
First, the update for $X_{\text{mis}}$ performs similarly for both Barker's and portkey Barker's, where we set $\beta = 0.98$. The average number of loops over $1e4$ samples for the two-coin and the portkey two-coin algorithms are 1.698 and 1.645, respectively, indicating that the original two-coin algorithm already works well enough for this component. We note specifically that close to nothing is lost by using portkey Barker's instead of Barker's acceptance probability.


For updating the parameters, the $\beta$s are set to be .99995 and .9995 for $\gamma_1$ and $\gamma_2$, respectively.
 In Figure~\ref{fig:WF_loops} are the number of loops of the two Bernoulli factories  for each iteration of the two Markov chains for both $\gamma_1$ and $\gamma_2$. The $\gamma_1$ update of the Barker's two-coin Markov chain requires an average of 489 loops with a maximum of 183077 loops. In comparison, the portkey two-coin algorithm runs an average of 43 loops with a maximum of 12622. The long tails of the number of loops for the original two-coin algorithm mean that the average run-time is quite slow, as evidenced from the computation times.
\begin{figure}[htbp]
	  \vspace{-.8cm}
  \centering
	\includegraphics[width = 2.2in]{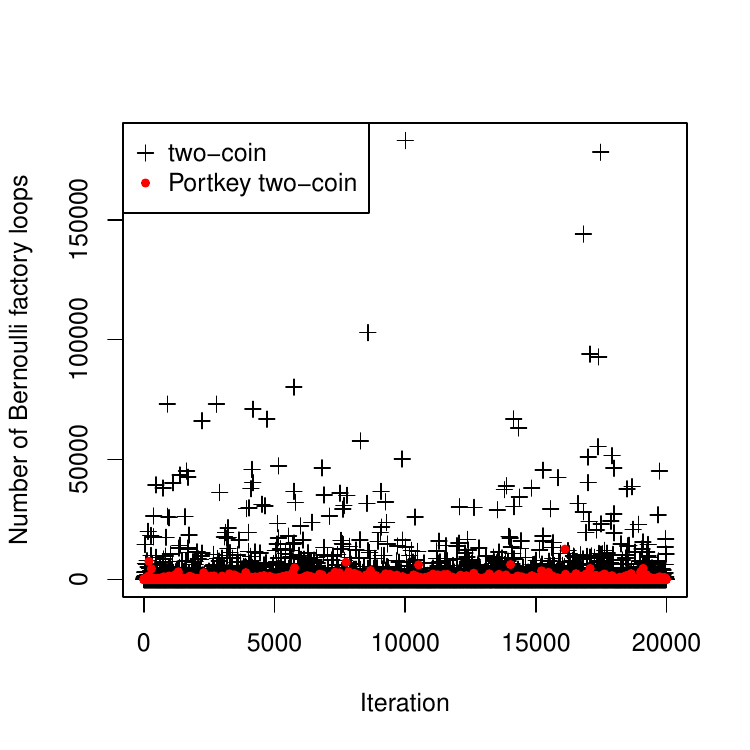}
	\includegraphics[width = 2.2in]{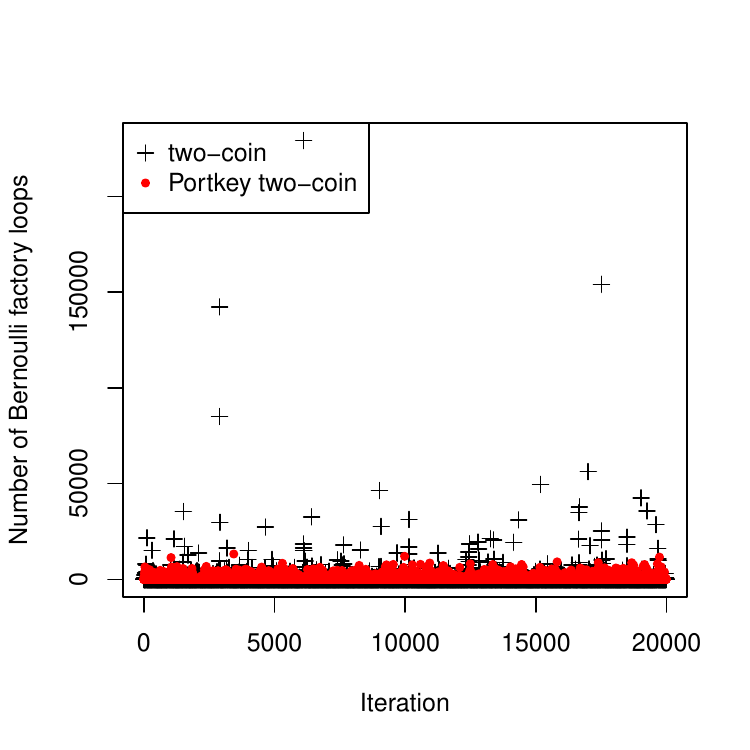} \vspace{-.5cm}
	\caption{Comparing the number of loops of the Bernoulli factory for two-coin and portkey two-coin algorithm. Left is for the $\gamma_1$ update and right is for the $\gamma_2$ update.}
	\label{fig:WF_loops}
\end{figure}
However, since both $\beta$s are close to 1, this gain in computational efficiency mainly impacts the Bernoulli factory, and has only a small effect on the Markov chain as witnessed by the autocorrelation plots in Figure~\ref{fig:wf_acf}.
\begin{figure}[htbp]
  \vspace{-.5cm}
	\centering
	\includegraphics[width = 2.1in]{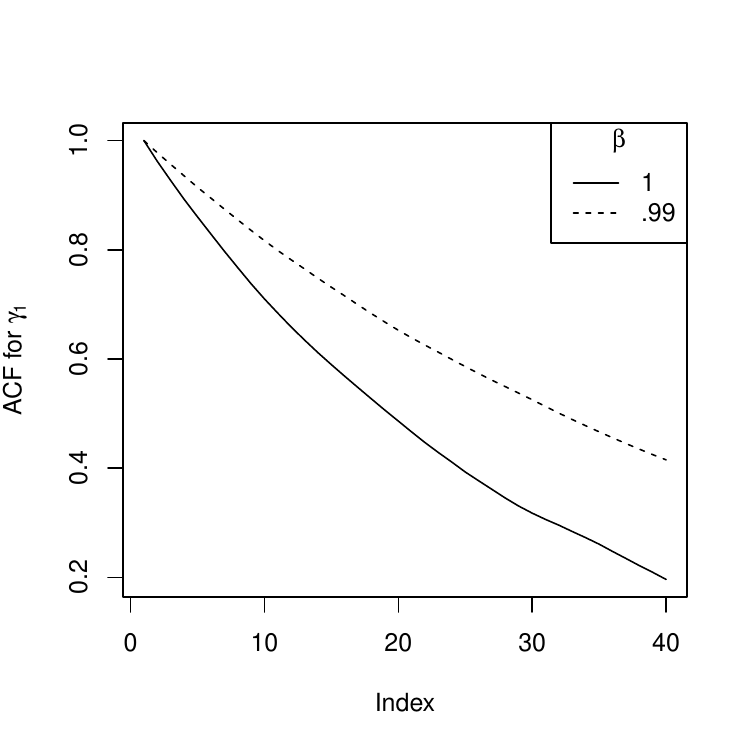}
		\includegraphics[width = 2.1in]{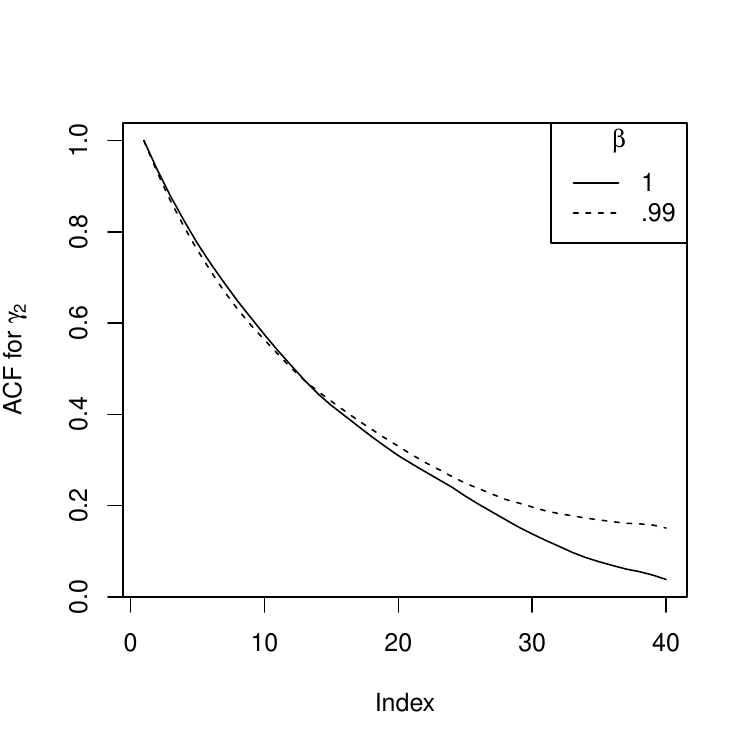}\vspace{-.5cm}
	\caption{Autocorrelation plots for $\gamma_1$ and $\gamma_2$.}
	\label{fig:wf_acf}
\end{figure}
 The estimated ESS for the posterior mean of $\gamma$ is 255 and 197, respectively for Barker's and portkey Barker's. However, the ESS per hour is 7.36 and 56.45, respectively. Thus, the portkey Barker's algorithm is $\approx$ 7 times more efficient that the original Barker's algorithm.
%
We stress that implementation of the portkey two-coin algorithm requires only minor changes to the code for any two-coin algorithm.

\section{Discussion} 
\label{sec:discussion}

Motivated by Bernoulli factories, we introduce a family of MCMC acceptance probabilities for intractable target distributions\footnote{An \texttt{R} package that implements the Bernoulli factories is here: \texttt{https://github.com/dvats/portkey} and complete reproducible codes for our examples can be found here: \texttt{https://github.com/dvats/PortkeyPaperCode}.}. We argue that the (flipped) portkey two-coin algorithm is a robust alternative to the two-coin algorithm; when the two-coin algorithm is efficient, the portkey two-coin algorithm is essentially similar to the two-coin algorithm. However, when the two-coin algorithm is computationally infeasible, the portkey two-coin algorithm is significantly more stable, has a smaller mean execution time, and allows for larger proposal variances. The resulting Markov chain is exact and for values of $\beta$ close to 1, the loss in statistical efficiency is small compared to the gain in computational efficiency. Moreover, tuning the portkey Barker's sampler is far easier since feedback from the algorithm is significantly faster than the original Barker's sampler.

Finally, the (portkey) two-coin algorithm finds immediate use in Bayesian inference for diffusions and jump-diffusions, as illustrated in Section~\ref{sec:wright_fisher_diffusion} and Bayesian models with intractable priors. To implement these algorithms more generally requires reasonable bounds on the target distribution. 
We believe the solution to this is to construct local bounds compatible with these algorithms and this makes for important future work.

\section{Acknowledgements}

Dootika Vats was supported by the National Science Foundation (DMS/1703691). Fl\'{a}vio Gon\c{c}alves
thanks FAPEMIG, CNPq and the University of Warwick for financial support. Krzysztof {\L}atuszy\'nski is supported by the Royal Society through the  Royal Society University Research Fellowship.  Gareth Roberts is supported by the EPSRC grants: {\em ilike}
(EP/K014463/1), CoSInES (EP/R034710/1) and Bayes for Health (EP/R018561/1).

\singlespacing
\bibliographystyle{apalike}
\bibliography{mcref}
\end{document}